  \providecommand\BibTeX{{%
    \normalfont B\kern-0.5em{\scshape i\kern-0.25em b}\kern-0.8em\TeX}}}
\DeclareMathAlphabet{\mathsl}{OT1}{cmss}{m}{sl}
\DeclareSymbolFontAlphabet{\mathset}{AMSb} 
\newcommand\m[1]{\mathit{#1}}
\newcommand\defeq{\overset{\text{\tiny def}}{=}}
\newcommand\ab{\allowbreak}
\def\|{\verb|}
\newcommand\Z{\mathset{Z}}
\newcommand\N{\mathset{N}}
\newcommand\Q{\mathset{Q}}
\newcommand\tuple[1]{\left\langle #1\right\rangle}
\newcommand\floor[1]{\left\lfloor #1\right\rfloor}
\newcommand\abs[1]{\left|#1\right|}
\newcommand\eg{e.g.}
\newcommand\ie{i.e.}
\newcommand\To{\Rightarrow}
{\end{array}\end{math}\end{quote}}
\newcommand\NEW[1]{}
\def\imp#1#2#3{{\land}\ar@{-}[#1]\ar@{-}[#2]\ar[#3]}
\def\imh#1#2(#3)#4{{\land}\ar@{-}[#1]\ar@{-}[#2]|(#3)\hole\ar[#4]}
\def\ihh#1(#2)#3(#4)#5{{\land}\ar@{-}[#1]|(#2)\hole\ar@{-}[#3]|(#4)\hole\ar[#5]}
\def\imc#1#2/#3/#4{{\land}\ar@{-}[#1]\ar@{-}@/#3/[#2]\ar[#4]}
\def\iph#1#2#3(#4){{\land}\ar@{-}[#1]\ar@{-}[#2]\ar[#3]|(#4)\hole}
\def\icc#1/#2/#3/#4/#5{{\land}\ar@{-}@/#2/[#1]\ar@{-}@/#4/[#3]\ar[#5]}
\def\ich#1#2/#3/(#4)#5{{\land}\ar@{-}[#1]\ar@{-}@/#3/|(#4)\hole[#2]\ar[#5]}
\def\impR#1#2#3{\color{red}{\land}\ar@{-}@[red][#1]\ar@{-}@[red][#2]\ar@[red][#3]}
\def\ichR#1#2/#3/(#4)#5{\color{red}{\land}\ar@{-}@[red][#1]\ar@{-}@[red]@/#3/|(#4)\hole[#2]\ar[#5]}
\def\ichD#1#2/#3/(#4)#5{{\land}\ar@{.}[#1]\ar@{.}@/#3/|(#4)\hole[#2]\ar@{.>}[#5]}
\newcommand\labfig[1]{\label{fig:#1}}
\newcommand\reffig[1]{\textrm{Fig.\,\ref{fig:#1}}}
\newcommand\subreffig[2]{\textrm{Fig.\,\ref{fig:#1}(\subref{fig:#2})}}
\newcommand\Reffig[1]{\textrm{Figure~\ref{fig:#1}}}
\newcommand\labthr[1]{\label{thr:#1}}
\newcommand\refthr[1]{\textsc{Theorem~\ref{thr:#1}}}
\newcommand\subrefthr[2]{\textsc{Theorem~\ref{thr:#1}}(\ref{thr:#2})}
\newcommand\labsec[1]{\label{sec:#1}}
\newcommand\refsec[1]{\textrm{Section~\ref{sec:#1}}}
\newcommand\Lens[2]{\mathscr{L}(#1,#2)}
\newcommand\Get{\m{get}}
\newcommand\Put{\m{put}}
\def\newlenslaw[#1]#2{%
\expandafter\newcommand\csname ll#1\endcsname{\textsc{(#2)}\xspace}
\expandafter\newcommand\csname bx#1\endcsname{\mathsl{#2}\xspace}
\expandafter\newcommand\csname li#1\endcsname{%
\textsc{(#2\ensuremath{{}^{\circ}})}\xspace}
\expandafter\newcommand\csname bi#1\endcsname{%
\textsf{#2\ensuremath{{}^{\circ}}}\xspace}
}
\newcommand\dueto[2]{#2 &&\text{#1}}
\newcommand\Dueto[2]{#2 \span\span \\[-2pt]&&&\text{#1}}
\begin{document}

\title{Towards a Complete Picture of Lens Laws}

\author{Keisuke Nakano}
\email{k.nakano@acm.org}
\orcid{0000-0003-1955-4225}
\affiliation{%
  \institution{Tohoku University}
  \city{Sendai}
  \state{Japan}
  \postcode{980-8577}
}

\renewcommand{\shortauthors}{K. Nakano}

\begin{abstract}
Bidirectional transformation, also called \emph{lens}, has played important roles
in maintaining consistency in many fields of applications.
A lens is specified by a pair of forward and backward functions
which relate to each other in a consistent manner.
The relation is formalized as a set of equations called lens laws.
This report investigates precise dependencies among lens laws:
which law implies another and which combination of laws implies another.
The set of such implications forms a complicated graph structure.
It would be helpful to check a well-definedness of bidirectional transformation in a lightweight way.
\end{abstract}



\keywords{bidirectional transformation, asymmetric lens, lens laws}


\maketitle


\section{Introduction}
Bidirectional transformation has been called \emph{lens} 
after Foster et\,al.~\cite{Foster07toplas} revisited a classic view updating problem
introduced by Bancilhon and Spyratos~\cite{Bancilhon81tods}.
They has played an important role 
for maintaining consistency in many fields of applications,
database management systems,
algebraic data structure interface on programming,
and model-driven software development.
In particular, lenses are employed in a core foundation of Dejima architecture~\cite{Ishihara19sfdi},
distributed systems where data are maintained in different peers,
which some parts of data in peers are expect to be synchronized.

A lens is a pair of a forward function \(\Get\) and a backward function \(\Put\)
which are used for maintaining consistency between two related data,
a \emph{source} and a \emph{view}.
Let \(S\) and \(V\) be sets of sources and views.
The function \(\Get:S\to V\) generates a view from a given source data typically 
by extracting a part of the source and arranging it in an appropriate way;
the function \(\Put:S\times V\to S\) reflects an update on the view
with assist of the original source because 
views have less information than the corresponding sources in general.

To define a meaningful bidirectional transformation,
two functions, \(\Get\) and \(\Put\), which forms a lens
should relate to each other.
The relationship is characterized by equations for these functions
called \emph{lens laws}.
\Reffig{core} shows four typical lens laws introduced in~\cite{Foster07toplas}.
\newcommand\Hdrbx[1]{\omit\rlap{#1}\\[-1pt]\qquad\qquad}
\begin{figure}[bt]
\begin{align*}
\Hdrbx{\llsg}
\forall s, s'\in S,\quad &\Put(s,\Get(s')) = s'\\
\Hdrbx{\llgp}
\forall s\in S,\quad     &\Put(s,\Get(s)) = s\\
\Hdrbx{\llpg}
\forall s\in S,\; \forall v\in V,\quad &\Get(\Put(s,v)) = v\\
\Hdrbx{\llpp}
\forall s\in S,\; \forall v, v'\in V, \quad &\Put(\Put(s,v),v') = \Put(s,v')
\end{align*}
\caption{Core lens laws}
\labfig{core}
\end{figure}
%
The \llsg law requires that
a source can always be determined by \(\Put\) only with an updated view
independently of the original source.
Under this law, views are as informative as the corresponding sources.
The \llgp law is a weaker version of the \llsg law.
This law requires that the same source as original is obtained by \(\Put\) 
whenever the view has not been updated.
The \llpg law is about consistency of view updating.
This law requires that any updated source by \(\Put\) with an updated view
yields the same view by \(\Get\).
The \llpp law is a condition imposed only on the \(\Put\) function.
This law requires that a source updated twice (or possibly more) by \(\Put\) with different views
consecutively is the same as one obtained by \(\Put\) with the last view.

These core lens laws characterize three practical properties on lenses
for meaningful bidirectional transformation:
\emph{bijective}, \emph{well-behaved}, and \emph{very-well-behaved}.
A bijective lens should satisfy the \llsg and \llpg laws.
A well-behaved lens should satisfy the \llgp and \llpg laws.
A very-well-behaved lens should satisfy the \llgp, \llpg and \llpp laws.
Programmers defining lenses for bidirectional transformation 
need to select an appropriate property for lenses according to their purpose and application
and check if a defined lens satisfies the corresponding lens laws.

One of the solutions is to use domain-specific languages for bidirectional transformation.
Many programming languages have been developed 
to make it easy to define meaningful lenses under specific lens laws~\cite{Foster07toplas,Ko16pepm}.
They basically give a solution by
either permitting to use limited primitives and their combinations
or imposing a strong syntactic restriction to write bidirectional programs.
If general-purpose languages are used for bidirectional programming,
the conformance to the desirable lens laws should be checked for each program.
The problem of checking the conformance is, however, in general undecidable
because it involves a kind of functional equalities.
This is why many bidirectional programming languages have been proposed,
where specific lens laws always hold due to a careful design of the languages.


Fischer et~al.~\cite{Fischer15mpc}
have shown that weaker lens laws can imply some of the core lens laws
which are useful to design bidirectional programming languages.
They give a `clear picture' of lens laws 
where relationship over 9 lens laws 
shown in~\reffig{core} and \reffig{laws} except two, \llwp and \llud, is investigated
to show which combination of weaker laws can imply a core law.
\newcommand\hdrbx[1]{\omit\rlap{#1}\\[-1pt]\quad}
\begin{figure}[bt]
\begin{align*}
\hdrbx{\llwp}
\forall s\in S,\; \forall v\in V,\quad &\Put(s,\Get(\Put(s,v))) = \Put(s,v)\\
\hdrbx{\llud}
\forall s\in S,\; \forall v\in V,\quad &\Put(\Put(s,v),\Get(s)) = s\\
\hdrbx{\llpt}
\forall s\in S,\; \forall v\in V,\quad &\Put(\Put(s,v),v) = \Put(s,v)\\
\hdrbx{\llss}
\forall s\in S,\; \exists v\in V,\quad &\Put(s,v) = s\\
\hdrbx{\llps}
\forall s\in S,\; \exists s'\in S,\; \exists v\in V,\quad &\Put(s',v) = s\\
\hdrbx{\llvd}
\forall s, s'\in S,\; \forall v, v'\in V,\quad &\Put(s,v)=\Put(s',v') \To v=v'\\
\hdrbx{\llpi}
\forall s\in S,\; \forall v, v'\in V,\quad &\Put(s,v)=\Put(s,v') \To v=v'
\end{align*}
\caption{Other lens laws}
\labfig{laws}
\end{figure}
Implications among lens laws often help to find their unexpected interaction
and give a clear insight to bidirectional transformation.
For example,
every bijective lens (that satisfies the \llsg and \llpg laws)
is found to be very-well-behaved (that is, to satisfy the \llgp, \llpg and \llpp laws)
from the facts 
that the \llpg law implies \llpi
and the conjunction of the \llsg and \llpi laws implies \llpp.
Fischer et~al. introduced several implications to show
that a well-behaved lens can be uniquely obtained only from a \(\Put\) function
as long as \(\Put\) satisfies the \llps, \llpt and \llpi laws.

A major goal of the present report is to improve Fischer et~al.'s clear picture of lens laws.
Specifically, we add more two lens laws, \llwp and \llud, which have been introduced
for a practical use~\cite{Hidaka10icfp,Diskin08models,Hidaka16ssm}
and find all implications among the 11 lens laws to identify an essence of bidirectional transformation.
This report describes the following two contributions:
\begin{itemize}
\item Relationship among lens laws including the \llwp and \llud laws is investigated
and the laws are shown to be classified based on it (\refsec{family}).
\item Implications among lens laws and their conjunctions are given as many as possible
(\refsec{beyond}). 
They are summarized by a complicated web structure shown in \reffig{web}.
\end{itemize}
Note that the set of implications introduced in the present report is not shown to be complete
in the sense that there may exist an implication among laws which can not be derived from the set.
It is left as future work.

\subsubsection*{Related Work}
As mentioned earlier, the present work is an improvement of a clear picture of lenses
introduced by Fischer et~al.~\cite{Fischer15mpc}.
They give only a few implications among lens laws except \llwp and \llud.
The present report covers much more implications some of which are not trivial.

Hidaka et~al.~\cite{Hidaka16ssm} gives a classification to bidirectional transformation approaches
including properties like lens laws required for well-behavedness.
They just present the properties independently of each other
and do not mention anything about their relationship.

Stevens~\cite{Stevens12bx} gives implications among a few of properties of symmetric lenses,
in which sources and views are evenly treated and \(\Get\) takes two arguments like \(\Put\).
Some of the implications she presents hold also for asymmetric ones as shown in the present report.
It would be interesting to consider a complete picture similar to ours for symmetric lens laws.




\section{Lens Laws and their Classification}
\labsec{family}
We shall give a brief summary to the 11 lens laws in~\reffig{core} and \reffig{laws}
and show implications among them, \eg, \llsg implies \llgp and \llpp implies \llpt.
Combining the implications tells us that all the lens laws are classified into three.
Except the \llwp law,
every lens law is weaker than or equal to exactly one of the \llsg, \llpg, or \llpp laws;
the \llwp law is strictly weaker than both the \llsg and \llpg laws.
Therefore we classify a set of lens laws into three families
according to which of three laws, \llsg, \llpg and \llpp, implies the law.
We call the three families, GetPut, PutGet, and PutPut, respectively.
The only \llwp law can belong to two families.
%

In the rest of this report, we write sets of sources and views as \(S\) and \(V\), respectively.
We denote by \(\Lens SV\) for a set of all possible combinations of \(\Get\) and \(\Put\) functions,
\ie,~\(\Lens SV \defeq (S\to V)\times(S\times V\to S)\).
For demonstrating examples of lenses,
we will use sets \(\Z\), \(\N\) and \(\Q\) of integers, non-negative integers and rationals, respectively,
and denote by \(\tuple{x,y}\) for an element of \(X\times Y\) with \(x\in X\) and \(y\in Y\).
For \(x\in\Q\),  \(\floor{x}\) denotes the largest integer less than or equal to \(x\).
Most of the examples presented here may look elaborate and artificial
so as not to satisfies as many other lens laws as possible.



\subsection{GetPut Family}
The GetPut family consists of six lens laws,
\llsg, \llgp, \llud, \llwp, \llss, and \llps,
all of which are entailment of the \llsg law%
\footnote{In this sense, the family might be called StrongGetPut family,
though a shorter name is adopted here.}.

\subsubsection*{Law \llsg}
This law indicates that the source is determined only by the view
even though the view has less information than the source in general.
If the view is given by \(\Get\) with a source,
then the source is obtained by \(\Put\) independently of the original source.
Under this law,
the \(\Get\) function is left-invertible with the \(\Put\) function.
For example where \(S=V=\Z\),
a pair of the \(\Get\) and \(\Put\) functions defined by
\(\Get(s) = 2s\) and
\(\Put(s,v)=\floor{v/2}\) satisfies the \llsg law.

\subsubsection*{Law \llgp}
This law is literally a weakened version of the \llsg law.
Under this law, 
the source does not change
as long as the view is the same as that obtained by the original source.
For example where \(S=V=\Z\),
a pair of the \(\Get\) and \(\Put\) functions defined by
\(\Get(s) = 2s\) and
\(\Put(s,v)=v-s\) satisfies the \llgp law
but not the \llsg law.

\subsubsection*{Law \llwp}
This law is literally a weakened version of the \llpg law.
While the \llpg law requires the equality between
the view corresponding to the source obtained by an updated view (that is, \(\Get(\Put(s,v))\))
and
the updated view (that is, \(v\)),
the \llwp law requires the same equality 
up to the further \(\Put\) operation with the original source.
This law is practically important because it allows tentative view updates 
which may be of an inappropriate form.
For example where \(S=\Z\) and \(V=\Z\times\Z\),
a pair of the \(\Get\) and \(\Put\) functions defined by
\(\Get(s) = \tuple{s,s}\) and
\(\Put(s,\tuple{v_1,v_2})=v_1\) satisfies the \llwp law
but not the \llpg law.
Updating a view into \(\tuple{v_1,v_2}\) with \(v_1\ne v_2\)
breaks the \llpg law
because \(\Get(\Put(s,\tuple{v_1,v_2}))=\tuple{v_1,v_1}\ne\tuple{v_1,v_2}\).
This law anti-literally belongs to the GetPut family
since it is an immediate consequence of the \llsg law.

\subsubsection*{Law \llud}
This law implies that any source can be recovered with the view
obtained from the source itself no matter how source is updated by a different view.
For example where \(S=V=\Z\),
a pair of the \(\Get\) and \(\Put\) functions defined by
\(\Get(s) = \floor{s/2}\) and
\(\Put(s,v) = 2v-s+1+2\floor{s/2}\) 
satisfies the \llud law.
%
Although it has been investigated in a few papers~\cite{Diskin08models,Foster10ssgip,Hidaka16ssm}%
\footnote{In~\cite{Diskin08models}, a lens is said \emph{undoabile}
when not only \llud but also \llpg hold in our terminology.},
the \llud law is not mentioned 
even by Fischer et~al.~\cite{Fischer15mpc} where many lens laws are studied.
Indeed, this law is the only exception in~\reffig{laws} that
they do not explore.
This is probably because it can be easily derived from the \llgp and \llpp laws.
However, we think that the \llud law is one of important lens laws
because it is as powerful as the other strong lens laws by combining with weak lens laws
as we will see later.

\subsubsection*{Law \llss}
This law requires every source is stable for a certain view.
Defining the \(\Get\) function that returns the corresponding view for a given source,
the pair conforms the \llss law.
For example where \(S=V=\Z\),
\(\Put(s,v)=(v-s+1)v\) satisfies the \llpt law 
for which there are infinitary many choices of the \(\Get\) function to have the \llgp law.

\subsubsection*{Law \llps}
This law requires literally surjectivity of the \(\Put\) function.
This law is a weakened version of the \llss law.
For example where \(S=V=\Z\),
\(\Put(s,v)=2s-3v\) satisfies the \llps law but not the \llss law.
\\[-.5ex]

\par
The GetPut family makes an implication web
as shown in~\subreffig{family}{GetPut}
where a double arrow \(\Longrightarrow\) stands for an implication between the two lens laws
(\eg, \(\llsg\To\llgp\))
and a single arrow \(\longrightarrow\) from the \({\land}\) symbol
stands for an implication from the conjunction of the two lens laws connected with \({\land}\)
to the lens law pointed by the arrow head
(\eg, \(\llwp\land\llss\ab\To\llgp\)).
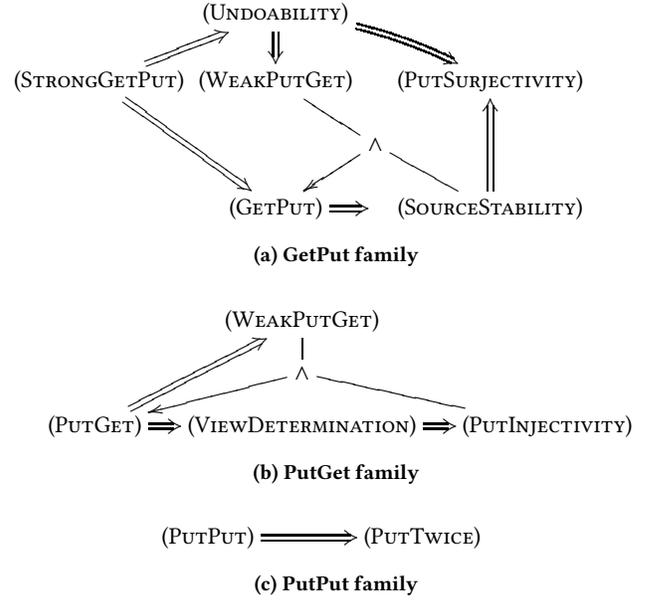
\begin{figure}
\subcaptionbox{GetPut family\labfig{GetPut}}[\linewidth]{
\hspace{-7mm}\(
\xymatrix@C=0pt@R=12pt{
             & \llud\ar@{=>}@/^5pt/[rrd]\ar@{=>}[d] \\
\llsg\ar@{=>}[ru]\ar@{=>}[rdd] & \llwp &&  \llps \\
&& \imp{lu}{rd}{ld} &&&\\
             & \llgp\ar@{=>}[r] && \llss\ar@{=>}[uu] & \\
}
\)
}\\[3ex]
\subcaptionbox{PutGet family\labfig{PutGet}}[\linewidth]{
\(
\xymatrix@C=12pt@R=8pt{
&\llwp\\
&\imp{u}{dr}{dl}\\
\llpg\ar@{=>}[uur]\ar@{=>}[r]&\llvd\ar@{=>}[r]&\llpi
}
\)
}\\[3ex]
\subcaptionbox{PutPut family\labfig{PutPut}}[\linewidth]{
\hspace{-5mm}\(
\xymatrix@C=36pt{
\llpp\ar@{=>}[r]&\llpt
}
\)
}
\caption{Three Families of Lens Laws}
\labfig{family}
\end{figure}

Let us define 
six classes \(\bxsg\), \(\bxgp\), \(\bxwp\), \(\bxud\), \(\bxss\), and \(\bxps\) of lenses
as subsets of \(\Lens SV\) corresponding six lens laws,
\eg, \(\bxsg\defeq
\{ (\Get,\Put) \in \Lens SV \mid \forall s, s'\in S.\; \Put(s,\Get(s'))=s'\}\).
%
%
Then every implication in the figure is shown in the following theorem
by an inclusion among lens classes.
\begin{theorem}\labthr{gp}
The GetPut family has the following inclusions.
\begin{enumerate}[(1)]
\item\labthr{sggpssps}
\(\bxsg\subseteq\bxgp\subseteq\bxss\subseteq\bxps\)
\item\labthr{sgudps}
\(\bxsg\subseteq\bxud\subsetneq\bxps\)
\item\labthr{udwp}
\(\bxud\subseteq\bxwp\)
\item\labthr{sswpgp}
\(\bxss\cap\bxwp\subseteq\bxgp\)
\end{enumerate}
\end{theorem}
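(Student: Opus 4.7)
The plan is to prove each of the four inclusions by direct substitution in the defining equations, supplying one counterexample for the strictness claim.

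Item~(1) collapses to three one-step instantiations. Taking $s'=s$ in \llsg yields \llgp. Taking $v=\Get(s)$ in \llgp exhibits a stabilising view for each $s$, hence \llss. Taking $s'=s$ in \llss (with the same witnessing $v$) exhibits \llps.

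For item~(2), \llsg says $\Put(-,\Get(s))$ is constantly $s$, so in particular $\Put(\Put(s,v),\Get(s))=s$, which is exactly \llud; this gives $\bxsg\subseteq\bxud$. The defining equation of \llud is itself the witness $\Put(s',v')=s$ with $s'=\Put(s,v)$ and $v'=\Get(s)$, so $\bxud\subseteq\bxps$. For the strict inclusion I plan to exhibit, on $S=V=\Z$, the lens with $\Get(s)=0$ and $\Put(s,v)=2s-3v$: the map $\Put$ is surjective onto $\Z$ (e.g.\ $\Put(2s,s)=s$), so \llps holds, whereas $\Put(\Put(1,0),\Get(1))=\Put(2,0)=4\neq 1$, so \llud fails.

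The one substantive step is item~(3), which I plan to prove by a double application of \llud. Instantiating \llud at the pair $(\Put(s,v),\Get(s))$ gives
\begin{equation*}
\Put\bigl(\Put(\Put(s,v),\Get(s)),\,\Get(\Put(s,v))\bigr)=\Put(s,v),
\end{equation*}
and rewriting the inner subterm $\Put(\Put(s,v),\Get(s))$ to $s$ by \llud at $(s,v)$ collapses this to $\Put(s,\Get(\Put(s,v)))=\Put(s,v)$, which is \llwp. Spotting this iterated use of \llud is the only nonroutine move in the whole theorem.

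Item~(4) is then a one-liner: given $s\in S$, use \llss to obtain a $v$ with $\Put(s,v)=s$, and apply \llwp at this $(s,v)$ to conclude $\Put(s,\Get(s))=\Put(s,\Get(\Put(s,v)))=\Put(s,v)=s$, which is \llgp. All remaining work is routine equational substitution plus the single counterexample above.
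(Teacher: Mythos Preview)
Your proposal is correct and follows essentially the same approach as the paper. In particular, your item~(3) is exactly the paper's double application of \llud, just presented in the reverse direction. One small remark: for item~(4) the paper's displayed proof actually invokes \llud rather than \llwp (proving only the weaker $\bxss\cap\bxud\subseteq\bxgp$), which appears to be a slip; your argument via \llwp is the correct one for the stated inclusion $\bxss\cap\bxwp\subseteq\bxgp$.
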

\begin{proof}
Only non-trivial inclusions, (3) and (4), are shown.
The inclusion \(\bxud\subseteq\bxwp\) is shown by
\addtolength{\jot}{-3pt}
\begin{align*}
\lefteqn{\Put(s,\Get(\Put(s,v))) }
\\&=\dueto{by \llud}{\Put(\Put(\Put(s,v),\Get(s)),\Get(\Put(s,v)))}
\\&=\dueto{by \llud}{\Put(s,v)}\text.
\end{align*}

The inclusion \(\bxss\cap\bxud\subseteq\bxgp\) is shown by
\begin{align*}
\lefteqn{\Put(s,\Get(s)) }
\\&=\Dueto{by \llss taking \(v\) such that \(\Put(s,v)=s\)}{\Put(\Put(s,v),\Get(s))}
\\&=\dueto{by \llud}{s}\text.
\end{align*}
\end{proof}

The theorem gives only minimum statements for implications of lens laws in the GetPut family.
Other implications (equivalently, inclusions) like 
\(\bxud\cap\bxss\subseteq\bxgp\)
are omitted 
because it is an immediate conclusion from 
\subrefthr{gp}{udwp} and \subrefthr{gp}{sswpgp}.

\subsection{PutGet Family}
The PutGet family consists of four lens laws, \llpg, \llwp, \llvd, and \llpi,
all of which are entailment of the \llpg law.
Each law is explained here except the \llwp law.

\subsubsection*{Law \llpg}
This law requires that all information in the updated view is reflected to the source 
so that the same view can be obtained from it.
For example where \(S=V=\Z\),
a pair of the \(\Get\) and \(\Put\) functions defined by
\(\Get(s) = \floor{s/2}\) and
\(\Put(s,v)=2v\) satisfies the \llpg law.
%

\subsubsection*{Law \llvd}
This law indicates that there is no distinct pair of views which generates the same source
by the \(\Put\) function.
Combining with the \llss law,
it guarantees existence and uniqueness of the \(\Get\) function 
to form a well-behaved lens~\cite{Fischer15mpc}.
For example where \(S=V=\Z\),
\(\Put(s,v)=2^{\abs{s}}(2v-1)\) satisfies the \llvd law.

\subsubsection*{Law \llpi}
This law requires literally injectivity of the \(\Put\) function for each source fixed.
This law guarantees that there is no distinct pair of views which leads the same source
for the fixed original source.
This law is a weakened version of the \llvd law.
The three law combination of \llpt, \llps and \llpi
is equivalent to the two law combination of \llss and \llvd~\cite{Fischer15mpc}.
For example where \(S=V=\Z\),
\(\Put(s,v)=2^{\abs{s}}v\) satisfies the \llpi law but violates the \llvd law.

\subsubsection*{}
The PutGet family makes an implication web
as shown in~\subreffig{family}{PutGet}.
Let us define 
classes \(\bxpg\), \(\bxvd\), and \(\bxpi\) of lenses
in a similar way to those of the GetPut family.
%
Every implication in the figure is shown in the following theorem.

\begin{theorem}\labthr{pg}
The PutGet family has three inclusions.
\begin{enumerate}
\item\labthr{pgwp} \(\bxpg\subseteq\bxwp\)
\item\labthr{pgvdpi} \(\bxpg\subsetneq\bxvd\subseteq\bxpi\)
\item\labthr{piudpg}\(\bxpi\cap\bxwp\subseteq\bxpg\)
\end{enumerate}
\end{theorem}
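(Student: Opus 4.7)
The plan is to dispatch the three inclusions roughly in order, since each reduces to a short calculation given the definitions in Figures~\ref{fig:core} and~\ref{fig:laws}.

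For (1), I would simply substitute: if $\bxpg$ holds, then $\Get(\Put(s,v))=v$ for all $s,v$, so $\Put(s,\Get(\Put(s,v)))=\Put(s,v)$, which is exactly the \llwp equation. For the non-strict half of (2), if $\Put(s,v)=\Put(s',v')$, applying $\Get$ to both sides and using $\llpg$ twice gives $v=\Get(\Put(s,v))=\Get(\Put(s',v'))=v'$, so $\llvd$ follows. The inclusion $\bxvd\subseteq\bxpi$ is then just the special case $s=s'$ of the $\llvd$ premise.

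For the strictness $\bxpg\subsetneq\bxvd$, I need a concrete lens satisfying $\llvd$ but not $\llpg$. I would reuse the $\Put(s,v)=2^{\abs{s}}(2v-1)$ from the preceding $\llvd$ discussion (on $S=V=\Z$): from $2^{\abs{s}}(2v-1)=2^{\abs{s'}}(2v'-1)$ one reads off $\abs{s}=\abs{s'}$ by comparing $2$-adic valuations of odd multiples, hence $v=v'$. Then pair this $\Put$ with any $\Get$ that fails $\llpg$, \eg, the constant $\Get(s)=0$, which gives $\Get(\Put(s,1))=0\ne 1$.

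For (3), the key computation is a single chain:
\begin{align*}
\Put(s,\Get(\Put(s,v))) &= \Put(s,v) &&\text{by \llwp}
\end{align*}
reads as an equation $\Put(s,v_1)=\Put(s,v_2)$ with $v_1=\Get(\Put(s,v))$ and $v_2=v$; by $\llpi$ this forces $v_1=v_2$, \ie, $\Get(\Put(s,v))=v$, which is \llpg. The main obstacle I anticipate is the strictness example in (2) — the rest is routine — and specifically checking that the chosen $\Put$ really does satisfy $\llvd$ (the valuation argument) rather than merely $\llpi$.
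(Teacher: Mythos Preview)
Your proposal is correct and matches the paper's approach: the paper dismisses (1) and (2) as trivial and proves (3) exactly as you do, by reading the \llwp equation as an instance of $\Put(s,\cdot)=\Put(s,\cdot)$ and cancelling via \llpi. You go further than the paper's proof in spelling out the strictness in (2) via the $2^{\abs{s}}(2v-1)$ example (which the paper introduces in the surrounding text but does not revisit in the proof), and your $2$-adic valuation argument for \llvd is fine.
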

\begin{proof} 
Both inclusions, (1) and (2), are trivial.
For (3),
%
%
%
suppose that the \llpi and \llwp laws hold.
Then we have \(\Put(s,\Get(\Put(s,v))) = \Put(s,v)\) because of the \llwp law.
This equation implies \(\Get(\Put(s,v))=v\) by the \llpi law,
hence we have the \llpg law.
\end{proof}
%

\subsection{PutPut Family}
The PutPut family consists of two lens laws, \llpp and \llpt,
which forms a single entailment of the \llpp law.

\subsubsection*{Law \llpp}
This law requires that
the source obtained by repeatedly applying the \(\Put\) functions with many views
is the same as that obtained by a single \(\Put\) application with the last view.
It plays an important role for \emph{state-based} lenses, that is,
the history of updates can always be ignored.
For example where \(S=V=\Z\), 
\(\Put(s,v)=2\floor{s/2}-2\floor{v/2}+v\)
satisfies the \llpp law.

\subsubsection*{Law \llpt}
This law imposes `idempotency' of the \(\Put\) function applied with the fixed view.
This law is obviously a weakened version of the \llpp law.
For example where \(S=V=\Z\), 
\(\Put(s,v)=2\floor{(s-v)/2}+v\) satisfies the \llpt law but violates the \llpp law).

\subsubsection*{}
The PutPut family makes a simple implication web
as shown in~\subreffig{family}{PutPut}.
Let us define 
two classes \(\bxpp\) and \(\bxpt\) of lenses
in a similar way to those of the GetPut family.
The implication in this family is shown in the following theorem
whose proof is straightforward.

\begin{theorem}\labthr{pp}
The PutPut family has an inclusion.
\begin{quote}\(\bxpp\subseteq\bxpt\)\end{quote}
\end{theorem}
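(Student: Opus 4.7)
The plan is to establish the inclusion by a direct, pointwise specialization of the defining quantified equation. Given an arbitrary lens $(\Get,\Put)\in\bxpp$, I would fix arbitrary $s\in S$ and $v\in V$, then invoke the \llpp law with the second view instantiated to the first, i.e., take $v':=v$ in the universally quantified equation $\Put(\Put(s,v),v')=\Put(s,v')$. The specialization is legitimate because \llpp holds for all pairs of views in $V$, and it yields $\Put(\Put(s,v),v)=\Put(s,v)$ on the nose, which is precisely the \llpt law. Since $s$ and $v$ were arbitrary, this shows $(\Get,\Put)\in\bxpt$, proving $\bxpp\subseteq\bxpt$.

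There is essentially no obstacle: \llpt is literally the diagonal instance of \llpp, so the implication does not require combining with any auxiliary law, in contrast to more delicate results in the paper such as \subrefthr{gp}{sswpgp}, which genuinely mix two hypotheses. Accordingly, I would keep the proof to a single sentence remarking on the instantiation $v':=v$, without introducing an aligned derivation. It may also be worth noting in passing that the inclusion is generally strict, as witnessed by the example $\Put(s,v)=2\floor{(s-v)/2}+v$ given just above the theorem statement, although proving strictness is not part of what the theorem asserts and need not appear in the proof itself.
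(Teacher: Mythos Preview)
Your proposal is correct and matches the paper's approach: the paper simply declares the proof ``straightforward'' without further detail, and the direct specialization $v':=v$ in \llpp that you describe is exactly that straightforward argument.
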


\section{Association beyond Families}
\labsec{beyond}
We have seen that a single lens law does not entail any lens law in the different family
except for the case involving the \llwp law.
In this section, 
we investigate inclusions of the form \(C_1\cap C_2\subseteq C\) beyond families. 
Specifically, possible inclusions of this form are presented where
either 
(a) \(C_1\) and \(C\) belong to the same family
or
(b) \(C_1\), \(C_2\) and \(C\) belong to different families each other.
%
All of those inclusions are proper although their proofs are omitted in the present report.

\subsection{Equivalence under Another Law}
First, possible implications of the form \(C_1\cap C_2\subseteq C\)
are studied 
where \(C_1\) and \(C\) belong to the same family and \(C\subsetneq C_1\).
This type of inclusions indicates 
that \(C_1\) and \(C\) are equivalent within \(C_2\),
\ie, \(C_1\cap C_2 = C\cap C_2\).

In the GetPut family, an inclusion of this type is found.
\begin{theorem}\labthr{psptss}
The following inclusion holds.
\begin{quote}\(\bxps\cap\bxpt\subseteq\bxss\)\end{quote}
\end{theorem}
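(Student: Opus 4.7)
The plan is to unpack the three laws and observe that the witness required for \llss on a source $s$ can be borrowed, via \llps, from some other source $s'$ that maps to $s$, and then collapsed onto $s$ itself using \llpt. The scheme is: pick any $s\in S$; use \llps to obtain $s',v$ with $\Put(s',v)=s$; then feed $\Put(s',v)$ back through \llpt to argue that $v$ is a stabilizing view for $s$.

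Concretely, I would proceed as follows. Fix an arbitrary $s\in S$. By \llps, there exist $s'\in S$ and $v\in V$ with $\Put(s',v)=s$. Applying \llpt at $(s',v)$ gives $\Put(\Put(s',v),v)=\Put(s',v)$. Substituting $\Put(s',v)=s$ on both sides yields $\Put(s,v)=s$, which is exactly the witness required by \llss for this particular $s$. Since $s$ was arbitrary, \llss follows, establishing $\bxps\cap\bxpt\subseteq\bxss$.

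I do not expect any real obstacle here: once the surjectivity witness from \llps is produced, idempotency of $\Put(\cdot,v)$ from \llpt does all the work, and no auxiliary law is needed. The only mild subtlety is notational — one has to remember that the stabilizing view for $s$ need not be obtained directly from $s$ via \llps, but rather comes along with the preimage $s'$; that is precisely what makes the use of \llpt necessary to transfer stability from $s'$ to $s$.
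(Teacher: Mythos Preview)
Your proof is correct and is essentially identical to the paper's own argument: obtain $s',v$ with $\Put(s',v)=s$ from \llps, then apply \llpt at $(s',v)$ and substitute to get $\Put(s,v)=s$. There is nothing to add.
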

\begin{proof}
Suppose that the \llps and \llpt laws hold.
For \(s\in S\), \llps gives \(s'\in S\) and \(v\in V\) such that \(\Put(s',v)=s\).
Then we have
\addtolength{\jot}{-3pt}
\begin{align*}
\Put(s,v) 
&=\dueto{\(\Put(s',v)=s\)}{\Put(\Put(s',v),v)}
\\&=\dueto{by \llpt}{\Put(s',v)}
\\&=\dueto{\(\Put(s',v)=s\)}{s},
\end{align*}
hence the \llss law holds taking \(v\).
\end{proof}

This inclusion gives an equivalence relation in the GetPut family 
under a lens law belonging to another family, that is,
\begin{align*}
&\bxss\cap\bxpt=\bxps\cap\bxpt
\end{align*}

The following theorem shows a inclusion
where two lens classes in the GetPut family are involved as well as the above
but those two are not related by inclusion.
Nevertheless it leads their equivalence under another lens laws in a different family
as we will see later.

\begin{theorem}[\cite{Diskin08models,Foster10ssgip}]\labthr{udgp}
The following inclusions hold.
\begin{quote}\(\bxgp\cap\bxpp\subseteq\bxud\)\end{quote}
\end{theorem}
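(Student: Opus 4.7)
The plan is to derive \llud directly by a short two-step rewriting that first collapses the iterated \(\Put\) using \llpp and then applies \llgp. Concretely, I would start from the left-hand side \(\Put(\Put(s,v),\Get(s))\) and observe that \llpp with view \(v\) replaced by the outer \(\Get(s)\) yields \(\Put(\Put(s,v),\Get(s))=\Put(s,\Get(s))\). Then \llgp rewrites \(\Put(s,\Get(s))\) to \(s\), giving the \llud equation.

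More explicitly, the key steps in order are: (1) fix arbitrary \(s\in S\) and \(v\in V\); (2) instantiate \llpp with the original source \(s\), first view \(v\), and second view \(\Get(s)\), obtaining \(\Put(\Put(s,v),\Get(s))=\Put(s,\Get(s))\); (3) apply \llgp to the right-hand side to conclude it equals \(s\); (4) chain these two equalities to establish \(\Put(\Put(s,v),\Get(s))=s\), which is exactly \llud.

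There is no real obstacle here — both ingredients are used in their most direct form and no auxiliary case analysis or choice of witnesses (unlike the \llss-based arguments in \refthr{gp} or \refthr{psptss}) is required. The only minor point to be careful about is that \llpp must be applied in the direction that eats the inner update rather than introduces one, i.e., reading the equation \(\Put(\Put(s,v),v')=\Put(s,v')\) from left to right with \(v':=\Get(s)\); once that direction is identified, \llgp finishes immediately.
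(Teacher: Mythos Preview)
Your proposal is correct and follows exactly the same two-step argument as the paper: apply \llpp with \(v':=\Get(s)\) to collapse the nested \(\Put\), then apply \llgp to obtain \(s\). Nothing to add.
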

\begin{proof}
Suppose that the \llgp and \llpp laws hold.
Then we have the \llud law because
\addtolength{\jot}{-3pt}
\begin{align*}
\Put(\Put(s,v),\Get(s)) &= \dueto{by \llpp}{\Put(s,\Get(s))}
\\&=\dueto{by \llgp}{s}\text.\\[-7.5ex]
\end{align*}
\end{proof}

\addtolength{\jot}{0pt}
\begin{figure*}[th]
\input{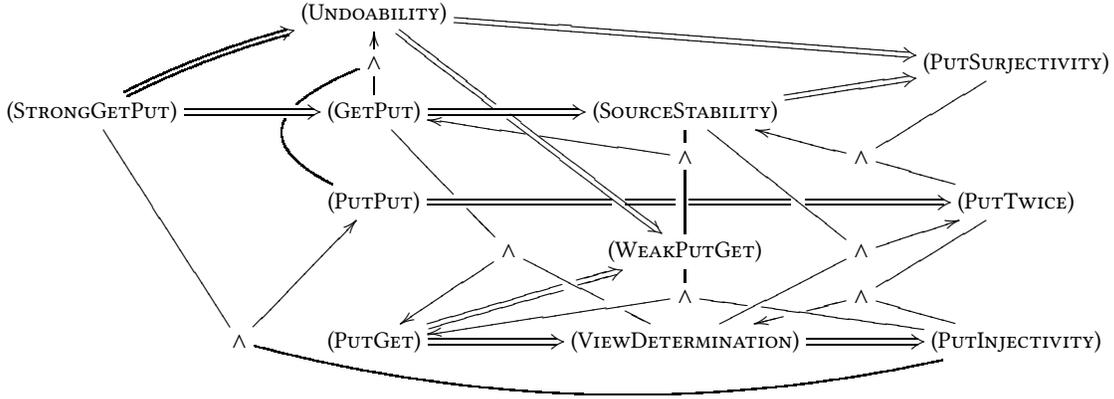} 
\caption{Implication among Lens Laws}
\labfig{web}
\end{figure*}
This theorem leads equivalence of the \llgp and \llud laws under \llpp law
as follows:
\begin{align*}
\lefteqn{\bxgp\cap\bxpp}
\\&\subseteq \dueto{by \refthr{udgp}}{\bxud\cap\bxpp}
\\&\subseteq \Dueto{by \subrefthr{gp}{sgudps} and \refthr{pp}}{\bxud\cap\bxps\cap\bxpt\cap\bxpp}
\\&\subseteq \Dueto{by \refthr{psptss} and \subrefthr{gp}{udwp}}{\bxwp\cap\bxss\cap\bxpp}
\\&\subseteq \dueto{by \subrefthr{gp}{sswpgp}}{\bxgp\cap\bxpp}
\end{align*}
which indicates \(\bxgp\cap\bxpp = \bxud\cap\bxpp\).

In the PutGet family, three inclusions of the form \(C_1\cap C_2\subseteq C\) 
are presented where \(C_1\) and \(C\) belong to the PutGet family
and \(C\subsetneq C_1\).
\begin{theorem}\labthr{eqopp2}
The following inclusions hold.
\begin{enumerate}[(1)]
\item\labthr{vdgppg}\(\bxvd\cap\bxgp\subseteq\bxpg\)
\item\labthr{piptvd}\(\bxpi\cap\bxpt\subseteq\bxvd\)
\end{enumerate}
\end{theorem}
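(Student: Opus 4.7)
For part (1), the plan is to derive \llpg from \llvd and \llgp. Fix arbitrary $s\in S$ and $v\in V$; the goal is $\Get(\Put(s,v))=v$. The key move is to apply \llgp not to $s$ but to the source $\Put(s,v)$, which yields $\Put(\Put(s,v),\Get(\Put(s,v)))=\Put(s,v)$. This is an equation between two $\Put$-expressions with potentially different sources and views, so \llvd immediately forces the view-arguments to agree, giving $\Get(\Put(s,v))=v$. The only creative step is picking the right source to feed into \llgp; the rest is mechanical.

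For part (2), the plan is to derive \llvd from \llpi and \llpt. Suppose $\Put(s,v)=\Put(s',v')$ and name this common source $t$. Applying \llpt with $(s,v)$ gives $\Put(t,v)=\Put(\Put(s,v),v)=\Put(s,v)=t$, and applying it with $(s',v')$ similarly gives $\Put(t,v')=t$. Combining, $\Put(t,v)=\Put(t,v')$, so \llpi at the fixed source $t$ yields $v=v'$, which is \llvd. The subtle point is noticing that the shared output $t$, although produced from two different starting sources, must itself be used as a single common source so that \llpi (which fixes the source argument) becomes applicable.

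Neither part presents a genuine obstacle: both arguments are one-line manipulations once the bridging source has been identified. Properness of the two inclusions, asserted at the start of the section, would require counterexample lenses to be exhibited, but those are explicitly omitted from the report.
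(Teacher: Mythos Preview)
Your proof is correct and follows exactly the same route as the paper: for (1) you apply \llgp at the source \(\Put(s,v)\) and then invoke \llvd, and for (2) you use \llpt on both sides to reduce to a common first argument \(t=\Put(s,v)=\Put(s',v')\) before applying \llpi. The only cosmetic difference is your naming of the intermediate source \(t\).
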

\begin{proof}
For (1), suppose that the \llvd and \llgp laws hold.
By the \llgp law, we have \(\Put(\Put(s,v),\ab\Get(\Put(s,v)))=\Put(s,v)\).
Since this equation implies \(\Get(\Put(s,\ab v))=v\) by the \llvd law,
we have the \llpg law.

For (2), suppose that the \llpi and \llpt laws hold.
When \(\Put(s,v)=\Put(s',v')\), we have
\begin{align*}
\Put(\Put(s,v),v)
&=\dueto{by \llpt}{\Put(s,v)}
\\&=\dueto{by the assumption}{\Put(s',v')}
\\&=\dueto{by \llpt}{\Put(\Put(s',v'),v')}
\\&=\dueto{by the assumption}{\Put(\Put(s,v),v')}
\end{align*}
This equation implies \(v=v'\) by the \llpi law,
hence we have the \llvd law.
\end{proof}

These inclusions makes two or three laws in the PutGet family
equivalent under another law in a different family:
\begin{align*}
&\bxvd\cap\bxgp = \bxpg\cap\bxgp
\\&\bxpi\cap\bxpt = \bxvd\cap\bxpt
\end{align*}

%
\subsection{Implication of Combination}
Next, possible implications of the form \(C_1\cap C_2\subseteq C\)
are studied 
where \(C_1\), \(C_2\) and \(C\) belong to different families.
Two inclusions of this type are found.

\begin{theorem}\labthr{impcmb}
The following inclusions hold.
\begin{enumerate}[(1)]
\item\labthr{ssvdpt}\(\bxss\cap\bxvd\subseteq\bxpt\)
\item\labthr{sgpipp}\(\bxsg\cap\bxpi\subseteq\bxpp\)
\end{enumerate}
\end{theorem}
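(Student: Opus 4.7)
I would prove the two inclusions independently, since they share no common lemma.

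For part (1), $\bxss\cap\bxvd\subseteq\bxpt$, the strategy is to apply $\llss$ not at the original source $s$ but at the updated source $\Put(s,v)$. This yields some witness $v_0\in V$ with $\Put(\Put(s,v),v_0)=\Put(s,v)$. Both sides are $\Put$-outputs, so $\llvd$ immediately forces $v_0=v$, and substituting back gives $\Put(\Put(s,v),v)=\Put(s,v)$, which is exactly $\llpt$. The only nonroutine move is the conceptual one of instantiating $\llss$ at the post-update source rather than at $s$.

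For part (2), $\bxsg\cap\bxpi\subseteq\bxpp$, I would proceed in two micro-steps. First I derive $\llpg$ as an intermediate property: instantiating $\llsg$ at $s'=\Put(s,v)$ gives $\Put(s,\Get(\Put(s,v)))=\Put(s,v)$, and then $\llpi$, which asserts that $\Put(s,-)$ is injective, collapses this to $\Get(\Put(s,v))=v$. Second, with $\llpg$ in hand, I prove $\llpp$ by the two-line chain $\Put(\Put(s,v),v')=\Put(\Put(s,v),\Get(\Put(s,v')))=\Put(s,v')$, where the first equality uses $\llpg$ to rewrite $v'$ under the outer $\Put$ and the second is a direct application of $\llsg$ with source $\Put(s,v)$ and view $\Get(\Put(s,v'))$.

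The main obstacle I anticipate is the insight underlying (2): $\llpi$ is a statement about views only, while $\llpp$ concerns iterated $\Put$ applications, so a priori it is not obvious how combining $\llpi$ with $\llsg$ can give $\llpp$. The key realization is that $\llsg$ almost says ``$\Get$ inverts $\Put(s,-)$ on its image'', and $\llpi$ supplies exactly the missing injectivity that upgrades this to $\llpg$. Once $\llpg$ is extracted as an intermediate conclusion, the remainder of both arguments is a short rewriting exercise, and I would not expect any hidden difficulty.
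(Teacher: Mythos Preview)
Your proposal is correct and follows essentially the same route as the paper for both parts: in (1) you instantiate \llss{} at the updated source and then apply \llvd{}, exactly as the paper does; in (2) both arguments combine \llsg{} with \llpi{} to obtain (an instance of) \llpg{} and then invoke \llsg{} once more to conclude \llpp{}. The only cosmetic difference is that you isolate the full \llpg{} law as a named intermediate step, whereas the paper derives just the particular instance $\Get(\Put(\Put(s,v),v'))=v'$ inline; the underlying reasoning is the same.
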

\begin{proof}
For (1), suppose that the \llss and \llvd holds.
By the \llss law, we take \(v'\) such that \(\Put(\Put(s,v),v')=\Put(s,v)\).
This equation implies \(v'=v\) by the \llvd law. Then we have
\begin{align*}
\lefteqn{\Put(\Put(s,v),v)}
\\&=\dueto{by \(v=v'\)}{\Put(\Put(s,v),v')}
\\&=\dueto{by \(\Put(\Put(s,v),v')=\Put(s,v)\)}{\Put(s,v)}
\end{align*}
which indicates the \llpt law.

For (2), suppose that the \llsg and \llpi laws hold.
By the \llsg law, we have
\begin{align*}
\Put(\Put(s,v),\Get(\Put(\Put(s,v),v'))) &= \Put(\Put(s,v),v')\text.
\end{align*}
By applying the \llpi law to this equation,
we have \(\Get(\Put(\Put(s,v),v')) = v'\).
Then the \llpp law holds because
\begin{align*}
\Put(\Put(s,v),v') 
&=\Dueto{by \llsg}{\Put(s,\Get(\Put(\Put(s,v),v')))}
\\&=\dueto{by the equation}{\Put(s,v')}\text.\\[-6ex]
\end{align*}
\end{proof}

\subsection{Summary of Implications}
Combining all implication theorems shown in the present report,
we have a big web structure among 11 lens laws
as shown in \reffig{web}.
This figure tells not only implications but equalities
among lens laws and their conjunctions.

For example, 
the equivalence relation shown in~\cite[Theorem 2]{Fischer15mpc},
\begin{align*}
&\llss\land\llvd\Leftrightarrow \\
&\qquad\llps\land\llpt\land\llpi\text,
\end{align*}
can be concluded from this figure
by checking that
the conjunction of the \llss and \llvd laws
entails the \llps, \llpt, and \llpi, and vice versa.

For another example,
any lens satisfying the \llwp, \llss, and \llvd laws 
can be found to be well-behaved
because the figure leads to the \llgp and \llpg laws
from the three laws.
This holds even when the \llpi law instead of \llvd.

\section{Concluding Remark}
A precise relationship among lens laws has been presented.
Eleven lens laws which has been introduced in the literature
on bidirectional transformation are found to relate to each other,
one implies another and a combination of two implies another.
The implication graph which shows all the relationship might be helpful
to check lens laws and certify properties for a given bidirectional transformation.

Our goal is to give a `complete picture' of lens laws
from which we can derive all possible implications of the form \(C_1\land\dots\land C_n\to C\)
with classes \(C_1\), \dots, \(C_n\) and \(C\) of lens laws.
To achieve the goal, it would be shown that
every implication of the form which cannot be obtained from the implication graph
has a counterexample.
The complete picture will help us
to understand the essence of bidirectional transformation.


\bibliographystyle{ACM-Reference-Format}
\bibliography{refs}


\end{document}